\newtheorem{lemma}{Lemma}
\newtheorem{theorem}{Theorem}
\newcommand{\ignore}[1]{}
\begin{document}
\title{Multi-Class Management with Sub-Class Service for Autonomous Electric Mobility On-Demand Systems }
\author{\IEEEauthorblockN{Syrine Belakaria$^*$, Mustafa Ammous$^*$, Sameh Sorour$^*$ and Ahmed Abdel-Rahim$^\dag$$^\ddag$}\\
\IEEEauthorblockA{$^*$Department of Electrical and Computer Engineering,
University of Idaho,
Moscow, ID, USA\\ $^\dag$Department of Civil and Environmental Engineering,
University of Idaho,
Moscow, ID, USA\\$^\ddag$National Institute for Advanced Transportation Technologies, University of Idaho, Moscow, ID, USA\\
Email: \{bela7898, ammo1375\}@vandals.uidaho.edu, \{samehsorour, ahmed\}@uidaho.edu}
\thanks{This study was supported by the Pacific Northwest University Transportation center (Pac trans) project KLK864. The authors would like to thank all who contribute to this study from both funding agencies.}
}

\maketitle

\begin{abstract}
Despite the significant advances in vehicle automation and electrification, the next-decade aspirations for massive deployments of autonomous electric mobility on demand (AEMoD) services are still threatened by two major bottlenecks, namely the computational and charging delays. This paper proposes a solution for these two challenges by suggesting the use of fog computing for AEMoD systems, and developing an optimized charging scheme for its vehicles with and multi-class dispatching scheme for the customers. A queuing model representing the proposed multi-class management scheme with sub-class service is first introduced. The stability conditions of the system in a given city zone are then derived. Decisions on the proportions of each class vehicles to partially/fully charge, or directly serve customers of possible sub-classes are then optimized in order to minimize the maximum response time of the system. Results show the merits of our optimized model compared to a previously proposed scheme and other non-optimized policies. 
\end{abstract}

\begin{IEEEkeywords}
Autonomous Mobility On-Demand; Electric Vehicle; Fog-based Architecture; Dispatching; Charging; Queuing Systems.
\end{IEEEkeywords}
\IEEEpeerreviewmaketitle
\section{Introduction}
\ignore{
Urban transportation systems are facing tremendous challenges nowadays due to the exploding demand on private vehicle ownership, which result in dramatic increases in road congestion, parking demand \cite{ref5}, increased travel times \cite{ref6}, and carbon footprint \cite{ref3} \cite{ref4}. This clearly calls for revolutionary solutions to sustain the future private mobility. Mobility on-demand (MoD) services \cite{ref7} were successful in providing a partial solution to the increased private vehicle ownership problems \cite{ref8}, by providing one-way vehicle sharing between dedicated pick-up and drop-off locations for a monthly subscription fee. The electrification of such MoD vehicles can also gradually reduce the carbon footprint problem. However, the need to make extra-trips to pick-up and after dropping off these MoD vehicle from and at these dedicated locations has significantly affected the convenience of this solution and reduced its effect in solving urban traffic problems.

Nonetheless, an expected game-changer for the success of these services is the significant advances in vehicle automation. With more than 10 million self-driving vehicles expected to be on the road by 2020 \cite{ref11},\ignore{ and the vision of governments and automakers to inject more electrification, wireless connectivity, and coordinated optimization on city roads,} it is strongly forecasted that private vehicle ownership will significantly decline by 2025, as individuals' private mobility will further depend on the concept of Autonomous Electric MoD (AEMoD) \cite{ref9}\cite{ref10}. Indeed, this service will relieve customers from picking-up and dropping-off vehicles at dedicated locations, parking hassle/delays/cost for parking, vehicle  insurance and maintenance costs, and provide them with added times of in-vehicle work and leisure. these autonomous mobility on-demand systems will significantly prevail in attracting millions of subscribers across the world and in providing on-demand and hassle-free mobility, especially in metropolitan areas. 
}

Urban transportation systems are facing tremendous challenges nowadays due to the dominant dependency and massive increases on private vehicle ownership, which result in dramatic increases in road congestion, parking demand \cite{ref5}, \cite{ref6}, and carbon footprint \cite{ref3} \cite{ref4}. These challenges can be mitigated with the significant advances of vehicle electrification, automation, and connectivity. With more than 10 million self-driving cars expected to be on the road by 2025 \cite{ref11}, it is forecasted that vehicle ownership will significantly decline by 2025, as it will be replaced by the novel concept of Autonomous Electric Mobility on-Demand (AEMoD) services \cite{ref9,ref10}. In such system, customers will simply need to press some buttons on an app to promptly get an autonomous electric vehicle transporting them door-to-door, with no pick-up/drop-off and driving responsibilities, no dedicated parking needs, no vehicle insurance and maintenance costs, and extra in-vehicle leisure times. With these qualities, AEMoD systems will succeed in attracting millions of subscribers and providing hassle-free private urban mobility.\\ 
\indent Despite the great aspirations for wide AEMoD service deployments by early-to-mid next decade, the timeliness (and thus success) of such service is threatened by two major bottlenecks. First, the expected massive demand of AEMoD services will result in excessive if not prohibitive computational and communication delays if cloud based approaches are employed for the micro-operation of such systems. Moreover, the typical full-battery charging rates of electric vehicles will not be able to cope with the gigantic numbers of vehicles involved in these systems, thus resulting in instabilities and unbounded customer delays. Several recent works Recent works have addressed important problems in AMoD systems by building different operation models for them like a distributed spatially averaged queuing model and a lumped Jackson network model \cite{ref1}  also the system was \cite{ref16} cast into a closed multi-class BCMP queuing network to  solve the routing problem on congested roads. Many key factors were not considered in these works in order to simplify the mathematical resolution. None of these papers considered the computational architecture for massive demands on such services, the vehicle electrification, and the influence of charging limitations on its stability.  \\
\ignore{
\begin{figure}[t]
\centering
      \includegraphics[width=.48\textwidth]{fog1.pdf}
      \caption{Fog-based architecture for AEMoD system operation}
      \label{fog}
 \end{figure}}
In \cite{ref17} \cite{ref18} we proposed a closely related model management model for Amods Systems. We proposed to resolve the first limitation, communication/computation delays, by suggesting the exploitation of the new and trendy \emph{fog-based networking and computing architectures} \cite{ref28}. The privileges brought by this technique \cite{ref17}, will allow handling instantaneous decision making applications such as AEMoD system operations in a distributed and accelerated way. The fog controller in each service zone is responsible of collecting information about customer requests, vehicle in-flow to the service zone, their state-of-charge (SoC), and the available full-battery charging rates in the service zone. Given the collected information, it can promptly make dispatching, and charging decisions for these vehicles in a timely manner. \\ 
\indent In order to solve the second problem, we proposed previously \cite{ref17,ref18} that the fog controller will smartly cope with the available charging capabilities of each service zone, by assigning to each customer a vehicle that has enough charge to serve him without in route charging. In the former solution, arriving vehicles in each service zone are subdivided into different classes in ascending order of their SoC corresponding to the different customer classes. Different proportions of each class vehicles will either wait (without charging) for dispatching to its corresponding customer class or partially charge to serve a customer from the same class. Vehicles arriving with depleted batteries will be allowed to either partially or fully charge. Despite the valuable results given by this model but the dispatching process of the system can result on having all the vehicles depleted by the end of the service which may cause the instability of the system. In this paper we keep the same charging scheme and we propose an enhanced dispatching process that allow each vehicle to serve all the sub-classes of customers (customers that needs lower state of charge) \\
\indent The question now is: \emph{To maintain charging stability and minimize the maximum response time of the system, What are the optimal Charging and sub-classes dispatching decisions?} To address this question, a queuing model representing the proposed multi-class management with sub-class service scheme is first introduced. The stability conditions of this model. Decisions on the proportions of each class vehicles to partially/fully charge, or directly serve customers and decision on which class will be served are then optimized. Finally, the merits of our proposed optimized decision scheme are tested and compared to several non optimized schemes.
\section{System Model}
We consider one service zone controlled by a fog controller connected to: (1) the service request apps of customers in the zone; (2) the AEMoD vehicles; (3) $C$ rapid charging points distributed in the service zone and designed for short-term partial charging; and (4) one spacious rapid charging station designed for long-term full charging. AEMoD vehicles enter the service in this zone after dropping off their latest customers in it. Their detection as free vehicles by the zone's controller can thus be modeled as a Poisson process with rate $\lambda_v $. Customers request service from the system according to a Poisson process. Both customers and vehicles are classified into $n$ classes based on an ascending order of their required trip distance and the corresponding SoC to cover this distance, respectively. From the thinning property of Poisson processes, the arrival process of Class $i$ customers and vehicles, $i \in \{0,\dots, n\}$, are both independent Poisson processes with rates $\lambda_c^{(i)}$ and $\lambda_v p_i$, where $p_i$ is the probability that the SoC of an arriving vehicle to the system belongs to Class $i$. Note that $p_0$ is the probability that a vehicle arrive with a depleted battery, and is thus not able to serve immediately. Consequently, $\lambda^{(0)}_c = 0$ as no customer will request a vehicle that cannot travel any distance. On the other hand, $p_n$ is also equal to 0, because no vehicle can arrive to the system fully charged as it has just finished a prior trip.\\
\indent Upon arrival, each vehicle of Class $i$, $i\in\{1,\dots,n-1\}$, will park anywhere in the zone until it is called by the fog controller to either: (1) join vehicles that will serve customer with their current state of charge with probability $q_i$ (The served customer can be from any Sub-class $j$ with $j \leq i$); or (2) partially charge up to the SoC of class $i+1$ at any of the $C$ charging points (whenever any of them becomes free), with probability $\overline{q}_i=1-q_i$, before parking again in waiting to serve a customer from any Sub-class $j$ with $j \leq i+1$. As for Class $0$ vehicles that are incapable of serving before charging, they will be directed to either fully charge at the central charging station with probability $q_0$, or partially charge at one of $C$ charging points with probability $\overline{q}_0 = 1-q_0$. In the former and latter cases, the vehicle after charging will wait to serve customers of any Sub-class $j \leq n$ and $1$, respectively.\\
Considering the above explanation, Each vehicle, whether decided to serve immediately or decided to charge before serving, will be able to serve: (1) customers from same class with probability $\Pi_{ii}$; or (2) customers with a trip distance from any Sub-class with probability $\Pi_{ij}$.
\begin{figure}[t]
\centering
 \includegraphics[width=.5\textwidth]{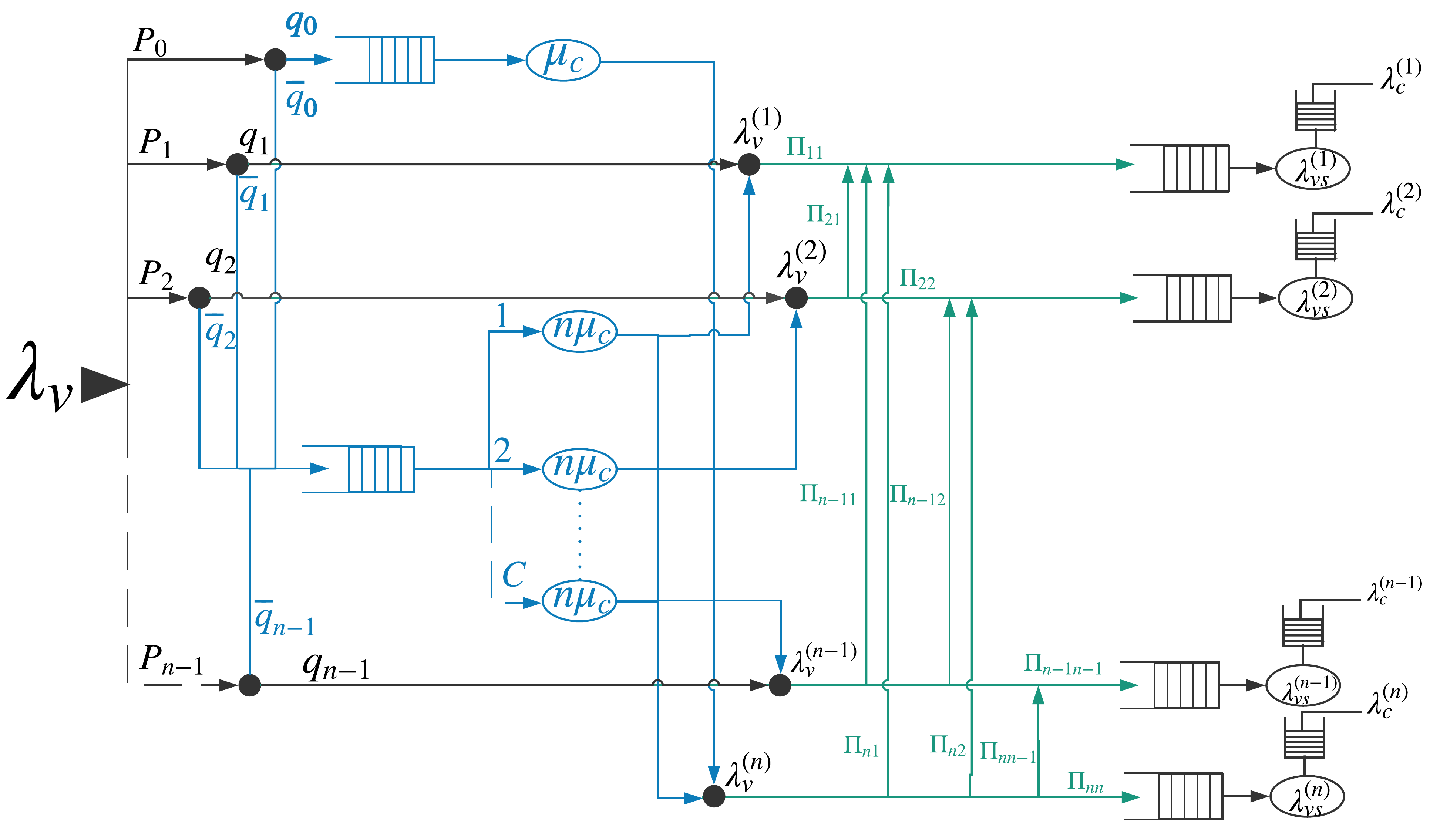}
\caption{Joint dispatching and partially/fully charging model, abstracting an AEMoD system in one service zone.}\label{fig:model}
    \end{figure}
\indent As widely used in the literature (e.g., \cite{ref15,ref16}), the full charging time of a vehicle with a depleted battery is assumed to be exponentially distributed with rate $\mu_c$. Given uniform SoC quantization among the $n$ vehicle classes, the partial charging time can then be modeled as an exponential random variable with rate $n\mu_c$. Note that the larger rate of the partial charging process is not due to a speed-up in the charging process but rather due to the reduced time of partially charging. The customers belonging to Class $i$, arriving at rate $\lambda_c^{(i)}$, will be served at a rate of $\lambda_{vs}^{(i)}$, which includes summation of proportions of arrival rates of vehicles that: (1) arrived to the zone with a SoC belonging to Class $j ~ \forall j \geq i$ and were directed to wait to serve Class $i ~\forall i \leq j$ customers; or (2) arrived to the zone with a SoC belonging to Class $j-1$ and were directed to partially charge to be able to serve a Sub-Class $i ~ \forall i\leq j$ customers. Given the above description and modeling of variables, the entire zone dynamics can thus be modeled by the queuing system \ignore{depicted in Fig.\ref{fig:model}}. This system includes $n$ M/M/1 queues for the $n$ classes of customer service, one M/M/1 queue for the central charging station, and one M/M/C queue representing the partial charging process at the $C$ charging points.\\
\indent Assuming that the service zones will be designed to guarantee a maximum time for a vehicle to reach a customer, our goal in this paper is to minimize the maximum expected response time of the entire system. By response time, we mean the time needed that vehicle starts moving from its parking or charging spot towards this customer. 
\section{System stability conditions}
In this section, we first deduce the stability conditions of our proposed joint dispatching and charging system, using the basic laws of queuing theory. 
Each class of vehicles with an arrival rate $ \lambda_{v}^{(i)} $ will be characterized by its SoC when it is ready to serve customers. 
Each of the $n$ classes of customers are served by a separate queue of vehicles, with $ \lambda_{vs}^{(i)} $ being the arrival rate of the vehicles that are available to serve the customers of the $i^{th}$ class.  Consequently, it is the service rate of the customers $i^{th}$ arrival queues.
We can thus deduce from \ignore{Fig. \ref{fig:model} and} the system model in the previous section the rate of vehicles with SoC that allows to serve a class $i$ or any sub-class $j \leq i$ that requires lower SoC to serve its customers:
\begin{equation}\label{eq:1}
\begin{aligned}
& & &\lambda_v^{(i)} = \lambda_v(p_{i-1}\overline{q}_{i-1} + p_{i}q_{i}) , \; i = 1, \ldots, n-1.\\
& & &\lambda_v^{(n)} = \lambda_v(p_{n-1}\overline{q}_{n-1} + p_{0}q_{0})\\
\end{aligned}
\end{equation}
Since we know that $\overline{q}_{i} + q_{i} = 1$
Then we substitute $\overline{q}_{i}$ by  $1 - q_{i}$ in order to have a system with $n$ variables
\begin{equation}\label{eq:2}
\begin{aligned}
& & &\lambda_v^{(i)} = \lambda_v(p_{i-1} - p_{i-1}{q_{i-1}} + p_{i}q_{i}) , \; i = 1, \ldots, n-1\\
& & &\lambda_v^{(n)} = \lambda_v(p_{n-1} - p_{n-1}{q_{n-1}} + p_{0}q_{0}) \ignore{, \; i = n}
\end{aligned}
\end{equation}
We can also deduce the expression of the rate of vehicles that will actually serve a class of customers $i$:  
\begin{equation}\label{eq:3}
\begin{aligned}
& & & \lambda_{vs}^{(i)} =\sum_{k=i}^n \lambda_v^{(k)} \Pi_{ki} , \; i = 1, \ldots, n 
\end{aligned}
\end{equation}
By injecting the expression of $\lambda_v^k$ in (\ref{eq:2}) in (\ref{eq:3}), we find: 
\begin{equation}\label{eq:4}
\begin{aligned}
& & & \lambda_{vs}^{(i)} =\lambda_v \sum_{k=i}^{n-1} (p_{k-1} - p_{k-1}{q_{k-1}} + p_{k}q_{k})\Pi_{ki} \\
& & & + \lambda_v (p_{n-1} - p_{n-1}{q_{n-1}} + p_{0}q_{0})\Pi_{ni} , \; i = 1, \ldots, n-1 \\
& & & \lambda_{vs}^{(n)} = \lambda_v (p_{n-1} - p_{n-1}{q_{n-1}} + p_{0}q_{0})\Pi_{nn}
\end{aligned}
\end{equation}
From the well-known stability condition of an M/M/1 queue: 
\begin{equation}\label{eq:5}
\begin{aligned}
& & & \lambda_{vs}^{(i)} > \lambda_c^{(i)} , \; i = 1, \ldots, n 
\end{aligned}
\end{equation}
\ignore{
It is also established from M/M/1 queue analysis that the average response (i.e., service) time for any customer in the $i$-th class can be expressed a:
\begin{equation} \label{eq:6}
\dfrac {1}{\lambda_{vs}^{(i)}-\lambda_c^{(i)}}
\end{equation}}
To guarantee customers' satisfaction, the fog controller of each zone must impose an average response time limit $T$ for any class. We can thus express this average response time constraint for the customers of the $i$-th class as:
\begin{equation} \label{eq:7}
\dfrac {1}{\lambda_{vs}^{(i)}-\lambda_c^{(i)}} \leq T \text{ or } \lambda_{vs}^{(i)} - \lambda_c^{(i)}\geq R, \; \text{with } R = \frac{1}{T}
\end{equation}
\ignore{
which can also be re-written as:
\begin{equation} \label{eq:8}
\lambda_{vs}^{(i)} - \lambda_c^{(i)}\geq R, \; \text{with } R = \frac{1}{T}
\end{equation}}
\indent Before reaching the customer service queues, the vehicles will go through a decision step of either to go to these queues immediately or partially charge. From the system model, we have the following stability constraints on the $C$ charging points and central charging station queues, respectively: 
\begin{equation}\label{eq:9}
\begin{aligned}
& & & \sum ^{n-1}_{i=0}\lambda_v(p_{i} - p_{i}{q_{i}}) < C (n \mu_c) \\
& & &\lambda_v p_{0}q_{0} < \mu_c
\end{aligned}
\end{equation}
The following lemma allows the estimation of the average needed vehicles arrival for a given service zone.
\begin{lemma}\label{lem1}
For the entire zone stability, and fulfillment of the average response time limit for all its classes, the average vehicles arrival rate must be lower bounded by:
\begin{equation}\label{eq:10}
\begin{aligned}
\lambda_v \geq \sum ^{n}_{i=1}{\lambda_c^{(i)}} + n R
\end{aligned}
\end{equation}
\end{lemma}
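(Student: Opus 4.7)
The plan is to sum the per-class average response time constraint in \eqref{eq:7} over all $n$ customer classes and then show that the resulting sum $\sum_{i=1}^{n} \lambda_{vs}^{(i)}$ collapses exactly to $\lambda_v$, which immediately yields the claimed lower bound. Summing \eqref{eq:7} gives
\begin{equation*}
\sum_{i=1}^{n} \lambda_{vs}^{(i)} \;\geq\; \sum_{i=1}^{n} \lambda_c^{(i)} + nR,
\end{equation*}
so all the work is in establishing the identity $\sum_{i=1}^{n} \lambda_{vs}^{(i)} = \lambda_v$.

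First I would substitute \eqref{eq:3} into the left-hand side and swap the order of summation, writing
\begin{equation*}
\sum_{i=1}^{n} \lambda_{vs}^{(i)} \;=\; \sum_{i=1}^{n}\sum_{k=i}^{n} \lambda_v^{(k)} \Pi_{ki} \;=\; \sum_{k=1}^{n} \lambda_v^{(k)} \sum_{i=1}^{k} \Pi_{ki}.
\end{equation*}
The key modeling observation is that a Class $k$ vehicle that is ready to serve must in fact dispatch to some sub-class $i\in\{1,\dots,k\}$, so the dispatching probabilities form a proper distribution, i.e.\ $\sum_{i=1}^{k}\Pi_{ki}=1$. This collapses the inner sum and leaves $\sum_{k=1}^{n}\lambda_v^{(k)}$.

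Next I would substitute \eqref{eq:1} into this expression and regroup the terms. After expanding,
\begin{equation*}
\sum_{k=1}^{n}\lambda_v^{(k)} = \lambda_v\!\left[\sum_{k=1}^{n-1}\!\bigl(p_{k-1}\overline{q}_{k-1}+p_k q_k\bigr) + p_{n-1}\overline{q}_{n-1} + p_0 q_0\right],
\end{equation*}
each index $k\in\{0,\dots,n-1\}$ contributes exactly once with its $p_k\overline{q}_k$ term and once with its $p_k q_k$ term. Using $\overline{q}_k+q_k=1$ and the fact that $p_n=0$ (so $\sum_{k=0}^{n-1} p_k = \sum_{k=0}^{n} p_k = 1$), this reduces to $\lambda_v\sum_{k=0}^{n-1} p_k = \lambda_v$. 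Combining with the summed constraint proves the lemma.

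The main obstacle is not algebraic; it is justifying cleanly that $\sum_{i=1}^{k}\Pi_{ki}=1$ and carefully tracking the boundary terms for $k=n$ and $k=0$ introduced by the special handling of depleted vehicles and of fully charged ones in \eqref{eq:1}. Once these two bookkeeping points are verified, the telescoping of the $p_k(\overline{q}_k+q_k)$ terms and the final summation are routine.
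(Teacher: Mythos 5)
Your proposal is correct and follows essentially the same route as the paper's own proof: summing the $n$ response-time constraints from \eqref{eq:7}, swapping the order of summation in \eqref{eq:3}, invoking the normalization $\sum_{i=1}^{k}\Pi_{ki}=1$ (constraint \eqref{eq:14-C5}), and collapsing $\sum_{k=1}^{n}\lambda_v^{(k)}$ to $\lambda_v$ via $\overline{q}_k+q_k=1$ and $\sum_{k=0}^{n-1}p_k=1$. Your bookkeeping of the boundary terms for $k=0$ and $k=n$ is in fact slightly more careful than the paper's (which omits a $\lambda_v$ factor in its equation \eqref{eq:22} as a typo), so no gaps remain.
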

\begin{proof}
The proof of Lemma \ref{lem1} is in Appendix A of \cite{ref13}.
\end{proof}

\section{Joint Charging and Dispatching optimization}
\subsection{Problem Formulation}
The goal of this paper is to minimize the maximum expected response time of the system's classes. The response time of any class is defined as the average of the duration from any customer request until a vehicle is dispatched to serve him/her. The maximum expected response time is expressed as: 
\begin{equation} \label{eq:12}
\max_{i\in\{1,\dots,n\}} \left\{\dfrac {1}{\lambda_{vs}^{(i)}-\lambda_c^{(i)}}\right\}
\end{equation}
It is obvious that the system's class having the maximum expected response time is the one that have the minimum expected response rate. In other words, we have:
\begin{equation}\label{eq:13}
\arg\max_{i\in\{1,\dots,n\}} \left\{\dfrac {1}{\lambda_{vs}^{(i)}-\lambda_c^{(i)}}\right\} = \arg\min_{i\in\{1,\dots,n\}} \left\{\lambda_{vs}^{(i)}-\lambda_c^{(i)}\right\}
\end{equation}
Consequently, minimizing the maximum expected response time is equivalent to maximizing the minimum expected response rate. Using the epigraph form \cite{refconvxbook} of the latter problem, we get the following stochastic optimization problem:
\begin{subequations}\label{eq:14}
\begin{align}
&\qquad\qquad \underset{q_0,\ldots ,q_{n-1},\Pi_{11},\ldots,\Pi_{nn}}{\text{maximize }} R \\
\text{s.t}  & \nonumber\\
& \lambda_v \sum_{k=i}^{n-1} (p_{k-1} - p_{k-1}{q_{k-1}} + p_{k}q_{k})\Pi_{ki}\\
& + \lambda_v (p_{n-1} - p_{n-1}{q_{n-1}} + p_{0}q_{0})\Pi_{ni} - \lambda_c^{(i)} \geq R\\
& \qquad\qquad \qquad\qquad \qquad\qquad  i = 1, \ldots, n-1 \label{eq:14-C1}\\
& \lambda_v (p_{n-1} - p_{n-1}{q_{n-1}} + p_{0}q_{0})\Pi_{nn} - \lambda^{(n)}_c \geq R \label{eq:14-C2} \\
&\sum ^{n-1}_{i=0}\lambda_v(p_{i} - p_{i}{q_{i}}) < C (n \mu_c) \label{eq:14-C3}\\
&\lambda_v p_{0}q_{0} < \mu_c  \label{eq:14-C4} \\
& \sum ^{i}_{j=1} \Pi_{ij}= 1 , \; i = 1, \ldots, n  \label{eq:14-C5}\\
& 0 \leq  \Pi_{ij} \leq  1 ,\; i = 1, \ldots, n,\; j = 1, \ldots, i \label{eq:14-C6}\\
& 0 \leq q_{i} \leq 1  , \; i = 0, \ldots, n-1  \label{eq:14-C7}\\
& \sum ^{n-1}_{i=0}p_i = 1 , \; 0 \leq  p_{i} \leq  1  , \; i = 0, \ldots, n-1  \label{eq:14-C9}\\
& 0 < R \leq \frac{\lambda_v - \sum ^{n}_{i=1}{\lambda_c^{(i)}}}{n}  \label{eq:14-C8}\\
\end{align}
\end{subequations}
The $n$ constraints in (\ref{eq:14-C1}) and (\ref{eq:14-C2}) represent the epigraph form's constraints on the original objective function in the right hand side of (\ref{eq:13}), after separation \cite{refconvxbook} and substituting every $\lambda_{vs}^{(i)}$ by its expansion form in (\ref{eq:4}). The constraints in (\ref{eq:14-C3}) and (\ref{eq:14-C4}) represent the stability conditions on charging queues. The constraints in (\ref{eq:14-C5}), (\ref{eq:14-C6}),(\ref{eq:14-C7}) and (\ref{eq:14-C9}) are the axiomatic constraints on the probabilities (i.e., values being between 0 and 1, and sum equal to 1). The Finally, Constraint (\ref{eq:14-C8}) is a positivity constraint on the minimum expected response rate. 
Finally, Constraint (\ref{eq:14-C8}) is is a positivity constraint and the upper bound on $R$ introduced by Lemma (\ref{lem1}).\\

\subsection{Lower Bound Analytical Solutions}
The optimization problem in (\ref{eq:14}) is a quadratic non-convex problem with second order differentiable objective and constraint functions. Usually, the solution obtained by using the Lagrangian and KKT analysis for such non-convex problems provides a lower bound on the actual optimal solution. Consequently, we propose to solve the above problem by first finding the solution derived through Lagrangian and KKT analysis, then, if needed, iteratively tightening this solution to the feasibility set of the original problem.
The Lagrangian function associated with the optimization problem in (\ref{eq:14}) is given by the following expression:
\begin{multline}\label{eq:15}
\begin{aligned}
& L(R,\mathbf{q},\boldsymbol{\Pi},\boldsymbol{\alpha},\boldsymbol{\beta},\boldsymbol{\gamma},\boldsymbol{\omega},\boldsymbol{\mu},\boldsymbol{\nu},\boldsymbol{\delta}) = - \sum_{i=0}^{n-1}\omega_{i}q_{i} - \omega_{n}(R-\epsilon_2)\\
 &+\sum_{i=1}^{n-1}\alpha_{i}[\lambda_c^{(i)} - \lambda_v \sum_{k=i}^{n-1} (p_{k-1} - p_{k-1}{q_{k-1}} + p_{k}q_{k})\Pi_{ki} \\
& \qquad\qquad \quad- \lambda_v (p_{n-1} - p_{n-1}{q_{n-1}} + p_{0}q_{0})\Pi_{ni}+ R]\\
& + \alpha_{n} (\lambda_c^{(n)}-\lambda_v (p_{n-1} - p_{n-1}{q_{n-1}} + p_{0}q_{0})\Pi_{ni} +R )\\
& + \beta_{0} (\sum ^{n-1}_{i=0}\lambda_v(p_{i} - p_{i}{q_{i}}) - C (n \mu_c) ) + \beta_{1}(\lambda_v p_{0}q_{0} - \mu_c)\\
&+ \sum_{i=0}^{n-1} \gamma_{i}(q_{i} - 1) + \gamma_{n}(R - \frac{\lambda_v - \sum ^{n}_{i=1}{\lambda_c^{(i)}}}{n}) - R\\
& + \sum_{i=1}^{n} \sum_{j=1}^{i} \nu_{ij} (\Pi_{ij}-1)- \mu_{ij}\Pi_{ij} +\sum_{i=1}^{n} \delta_i(\sum_{k=1}^{i}\Pi_{ik}-1)
\end{aligned}
\end{multline}
where:
\begin{itemize}
\item $\mathbf{q}= [q_0,\dots,q_{n-1}]$ is the vector of charing decisions.
\item $\mathbf{\Pi}= [\Pi_{ij}]$ is the vector of dispatching decisions to serve customers.
\item $\boldsymbol{\alpha} = [\alpha_{i}]$, such that $\alpha_{i}$ is the associated Lagrange multiplier to the $i$-th customer queues inequality.
\item $\boldsymbol{\beta} = [\beta_{i}]$, such that $\beta{i}$ is the associated Lagrange multiplier to the $i$-th  charging queues inequality.
\item $\boldsymbol{\delta} = [\delta_{i}]$, such that $\delta_{i}$ is the  associated Lagrange multiplier to the $i$-th equality constraint on the dispatching decision.
\item $\boldsymbol{\gamma} = [\gamma_{i}]$, such that $\gamma_{i}$ is the associated Lagrange multiplier to the $i$-th  upper bound inequality on the charging decisions and the expected response time.
\item $\boldsymbol{\omega} = [\omega_{i}]$, such that $\omega_{i}$ is the  associated Lagrange multiplier to the $i$-th lower bound inequality on the charging decisions and the expected response time.
\item $\boldsymbol{\mu} = [\mu_{ij}]$, such that $\mu_{ij}$ is the  associated Lagrange multiplier to the $j$-th lower bound inequality on the dispatching decision $\Pi_{ij}$.
\item $\boldsymbol{\nu} = [\nu_{ij}]$, such that $\nu_{ij}$ is the  associated Lagrange multiplier to the $j$-th upper bound inequality on the dispatching decision $\Pi_{ij}$.

\end{itemize}
For more accurate resolutions, Three small positive constants $\epsilon_0$, $\epsilon_1$ and $\epsilon_2$ are added to the stability conditions on the charging queues and the positivity condition on the maximum expected waiting time to make them non strict inequalities.\\
\indent Solving the equations given by the KKT conditions on the problem equality and inequality constraints, the following theorem illustrates the optimal lower bound solutions of the problem in (\ref{eq:14}).
\begin{theorem}\label{thm1}
The lower bound solution of the optimization problem in (\ref{eq:14}), obtained from Lagrangian and KKT analysis can be expressed as follows:
\begin{figure*}
\begin{equation}\label{eq:16}
\begin{aligned}
& R^* = \begin{cases}
           \frac{\lambda_v - \sum ^{n}_{i=1}{\lambda_c^{(i)}}}{n} &  \gamma_n^* \ne 0 \\    
            \epsilon_2 & \omega_n^* \ne 0\\ 
            \sum_{i=1}^{n-1}\alpha_{i}^*( \lambda_v \sum_{k=i}^{n-1} (p_{k-1} - p_{k-1}{q_{k-1}^*} + p_{k}q_{k}^*)\Pi_{ki}^*+ \lambda_v (p_{n-1} - p_{n-1}{q_{n-1}^*} + p_{0}q_{0}^*)\Pi_{ni}^* - \lambda_c^{(i)})\\
            + \alpha_{n}^* ( \lambda_v (p_{n-1} - p_{n-1}{q_{n-1}^*} + p_{0}q_{0}^*)\Pi_{nn}^* - \lambda^{(n)}_c  ) & Otherwise\\
            \end{cases} \\ 
& q_{0}^* = \begin{cases}
            0 &  \alpha_{1}^*\Pi_{11}^* - \sum_{i=1}^n \alpha_{i}^* \Pi_{ni}^* - \beta_0^* + \beta_1^*>0 \\
            1 & \alpha_{1}^*\Pi_{11}^* - \sum_{i=1}^n \alpha_{i}^* \Pi_{ni}^* - \beta_0^* + \beta_1^* < 0 \\
            \frac{ \lambda_c^{(n)}+\lambda_v p_{n-1}q_{n-1}^*\Pi_{nn}^* - \lambda_v p_{n-1} \Pi_{nn}^* + R^*}{\lambda_v p_{0} \Pi{nn}^* } & \alpha_{n}^* \ne 0 \\
            \frac{\mu_c}{\lambda_v^* p_0}& \beta_1^* \ne 0 \\
            \zeta_0(R^*,q^*,\Pi^*,\alpha^*,\beta^*,\gamma^*,\omega^*,\mu^*,\nu^*,\delta^*)& Otherwise \\
            \end{cases} \\ 
& q_{i}^* =   \begin{cases}
            0 \qquad\qquad\qquad\qquad\qquad\qquad\qquad\qquad\quad  \alpha_{i+1}^* - \alpha_{i}^* -\beta_0^* > 0  \\
            1 \qquad\qquad\qquad\qquad\qquad\qquad\qquad\qquad\quad  \alpha_{i+1}^* - \alpha_{i}^* - \beta_0^*<0     \qquad\qquad\qquad\qquad i= 1, \ldots, n-1.\\
            \frac{ R^*+\lambda_c^{(i)}+\lambda_v\left[\sum_{k=i+2}^{n-1}(p_{k-1}{q_{k-1}^*} -p_{k}{q_{k}^*})\Pi_{ki} + (p_{n-1}{q_{n-1}^*} -p_{0}{q_{0}^*})\Pi_{ni} - \sum_{k=i}^{n}p_{k-1}\Pi_{ki}+p_{i-1}q_{i-1}^*\Pi_{ii}^*-p_{i+1}q_{i+1}\Pi_{i+1i+1} \right]}{\lambda_v p_{i}(\Pi_{ii}-\Pi_{i+1i+1}) } &\alpha_{i}^* \ne 0\\
             \zeta_i(R^*,q^*,\Pi^*,\alpha^*,\beta^*,\gamma^*,\omega^*,\mu^*,\nu^*,\delta^*)& Otherwise \\
            \end{cases}    \\
            & \Pi_{ij}^* =   \begin{cases}
            0 &  \alpha_{j}^*\lambda_v (p_{i-1}q_{i-1^*}-p_{i-1}-p_{i}q_{i}^*) + \delta_i^* > 0 \\
            1 &  \alpha_{j}^*\lambda_v (p_{i-1}q_{i-1^*}-p_{i-1}-p_{i}q_{i}^*) + \delta_i^*  < 0  \\     \zeta_{ij}(R^*,q^*,\Pi^*,\alpha^*,\beta^*,\gamma^*,\omega^*,\mu^*,\nu^*,\delta^*)& Otherwise \\
            \end{cases}    \quad i= 1, \ldots, n-1.
\end{aligned}
\end{equation}
\hrule
\end{figure*}
where $\zeta_i$ and $\zeta_{ij}$ are the solution that that maximize $\underset{\mathbf{q},\mathbf{\Pi}}\inf~ L(\mathbf{q},\Pi^*,R^*,\alpha^*,\beta^*,\gamma^*,\omega^*,\mu^*,\nu^*,\delta^*)$
\end{theorem}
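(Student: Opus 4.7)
The plan is to derive the claimed expressions by writing the Karush--Kuhn--Tucker stationarity conditions on the Lagrangian in (\ref{eq:15}) and then partitioning the solution set according to complementary slackness on each inequality constraint. The primal variables are $R$, $q_0,\dots,q_{n-1}$, and the $\Pi_{ij}$; for each of them I would compute the partial derivative of $L$, set it to zero to get an affine relation among the dual multipliers, and read off the cases from which multipliers are active. I first handle $R$: $\partial L/\partial R$ couples $-1$, $\omega_n$, $\gamma_n$ and the $\alpha_i$. Complementary slackness $\gamma_n(R - (\lambda_v - \sum \lambda_c^{(i)})/n)=0$ and $\omega_n(R-\epsilon_2)=0$ immediately delivers the first two branches of $R^*$. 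In the remaining branch, both bracketing multipliers vanish, at least one customer-queue constraint must be binding (otherwise $R$ could be increased while preserving feasibility), and summing the active constraints in (\ref{eq:14-C1})--(\ref{eq:14-C2}) weighted by $\alpha_i^*$ yields the third branch verbatim.

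Next I turn to the charging decisions. Differentiating $L$ with respect to $q_0$ gives an affine equation whose ``free'' coefficient is precisely $\alpha_1^{*}\Pi_{11}^{*} - \sum_{i=1}^{n}\alpha_i^{*}\Pi_{ni}^{*} - \beta_0^{*} + \beta_1^{*}$; by complementary slackness on the box $0\le q_0\le 1$, a strictly positive coefficient forces $\omega_0^{*}>0$ and hence $q_0^{*}=0$, while a strictly negative one forces $\gamma_0^{*}>0$ and hence $q_0^{*}=1$. When one of the queue equalities is active instead, I would solve that binding equation directly for $q_0$: the case $\alpha_n^{*}\ne 0$ uses the Class-$n$ response-time equality in (\ref{eq:14-C2}) solved linearly in $q_0$, and the case $\beta_1^{*}\ne 0$ uses the central-charger saturation $\lambda_v p_0 q_0 = \mu_c$. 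The same procedure applied to $\partial L/\partial q_i$ for $i\ge 1$ yields the coefficient $\alpha_{i+1}^{*}-\alpha_i^{*}-\beta_0^{*}$, and, when $\alpha_i^{*}\ne 0$, the Class-$i$ constraint (\ref{eq:14-C1}) is affine in $q_i$ and inverts to the displayed fraction.

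For the dispatching variables, $\partial L/\partial \Pi_{ij}= \nu_{ij} - \mu_{ij} + \delta_i - \alpha_j^{*}\lambda_v (p_{i-1} - p_{i-1}q_{i-1} + p_i q_i)$, so complementary slackness on the box $[0,1]$ gives exactly the listed sign test $\alpha_j^{*}\lambda_v(p_{i-1}q_{i-1}-p_{i-1}-p_i q_i)+\delta_i^{*}$ and the boundary values $0$ and $1$; the interior situation is closed by the simplex constraint (\ref{eq:14-C5}). The main obstacle, and the reason for the abstract symbols $\zeta_i$ and $\zeta_{ij}$ in the ``Otherwise'' cases, is that (\ref{eq:14}) is a genuinely non-convex quadratic program: in the residual regime where neither the box constraints nor the natural queue or charger equalities become active, the stationarity system couples several $q_i$ and $\Pi_{ij}$ through quadratic terms and does not admit a generic closed form. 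I would therefore define $\zeta_i$ and $\zeta_{ij}$ as the maximizers of $\inf_{\mathbf{q},\boldsymbol{\Pi}} L$ evaluated at the optimal duals, and note that this is consistent with the theorem only producing a Lagrangian lower bound, later to be tightened as foreshadowed in the text. I would conclude by checking that the constructed tuple satisfies dual feasibility ($\alpha,\beta,\gamma,\omega,\mu,\nu\ge 0$) and primal feasibility in every non-degenerate case, which confirms it is a valid KKT point.
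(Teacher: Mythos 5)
Your proposal follows essentially the same route as the paper's own proof: KKT stationarity of the Lagrangian in (\ref{eq:15}) combined with the complementary slackness conditions in (\ref{eq:28}), a sign analysis of the resulting multiplier coefficients to force the boundary values $0$ and $1$ (the paper packages this as $q_i^*(q_i^*-1)\cdot(\text{coefficient})=0$ after multiplying each stationarity equation by its variable, while you argue directly that a positive coefficient forces $\omega^*>0$ and a negative one forces $\gamma^*>0$ or $\nu^*>0$ --- equivalent reasoning), direct linear inversion of whichever constraint is binding when $\alpha_n^*$, $\alpha_i^*$ or $\beta_1^*$ is nonzero, the identity $\sum_{i=1}^{n}\alpha_i^*=1$ from $\partial L/\partial R=0$ underpinning the third branch of $R^*$, and relegation of the residual non-convex coupled cases to the maximizers $\zeta_i,\zeta_{ij}$ of $\inf_{\mathbf{q},\boldsymbol{\Pi}} L$. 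One caveat: the full derivative $\partial L/\partial q_i$ actually yields the coefficient $\sum_{j=1}^{i}\alpha_j^*(\Pi_{i+1,j}^*-\Pi_{ij}^*)+\alpha_{i+1}^*\Pi_{i+1,i+1}^*-\beta_0^*$ (as in the paper's appendix), which collapses to your stated $\alpha_{i+1}^*-\alpha_i^*-\beta_0^*$ only under same-class dispatching, so your claim that the derivative ``yields'' the reduced form inherits an inconsistency already present between the theorem box and the paper's own derivation rather than introducing a new gap.
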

\begin{proof}
The proof of Theorem \ref{thm1} is in Appendix C in \cite{ref13}.
\end{proof}
\subsection{Solution Tightening}
As stated earlier, the closed-form solution derived in the previous section from analyzing the constraints' KKT conditions does not always match with the optimal solution of the original optimization problem, and is sometimes a non-feasible lower bound on our problem. Unfortunately, there is no method to find the exact closed-from solution of non-convex optimization. However, starting from the derived lower bound, we can numerically tighten this solution by iterating toward the feasible set of the original problem. There are several algorithms to iteratively tighten lower bound solutions, one of which is the \textit{Suggest-and-Improve algorithm} algorithm proposed in \cite{qcqp} to tighten non-convex quadratic problems. We will thus propose to employ this method whenever the KKT conditions based solution is not feasible and tightening is required.

\section{Simulation Results}
In this section, we test the merits of our proposed scheme using extensive simulations. The metric used to evaluate these merits is the maximum expected response times of the different classes. For all the performed simulation figures, the full-charging rate of a vehicle is set to $\mu_c = 0.033$ mins$^{-1}$, and the number of charging points $C=40$.\\
\indent Fig. \ref{fig:comp-decreasing} depicts the maximum expected response time for different values of $\sum ^{n}_{i=1}{\lambda_c^{(i)}}$, while fixing $\lambda_v$ to 8 min$^{-1}$. For this setting, $n=7$ is the smallest number of classes that satisfy the stability condition in Lemma 2 in \cite{ref17}. From queuing theory rules\cite{Probabook}\cite{Probabook2} the more serving queues a system have, the higher the waiting time will be. Moreover, in previous related work \cite{ref17} \cite{ref18}, we showed that increasing the number of classes $n$ beyond its strict lower bound introduced in Lemma 2 in \cite{ref17} will damage the system performance and increase the maximum response time.\\
\indent Fig. \ref{fig:comp-decreasing} compare the maximum expected response time performances against $\sum ^{n}_{i=1}{\lambda_c^{(i)}}$, for different decision approaches namely our derived optimal decisions to the following decisions sets:
\begin{enumerate}
\item Optimized charging decisions (i.e. $q_i~\forall~i$) with same class dispatching (i.e. $\Pi_{ii} = 1~\forall~i$ and $\Pi_{ij}=0~\forall~i,j\neq i$)
\item Always partially charge decisions (i.e. $q_i=0~\forall~i$) with same class dispatching (i.e. $\Pi_{ii} = 1~\forall~i$ and $\Pi_{ij}=0~\forall~i,j\neq i$)
\item Equal split charging decisions (i.e. $q_i=0.5~\forall~i$) with same class dispatching (i.e. $\Pi_{ii} = 1~\forall~i$ and $\Pi_{ij}=0~\forall~i,j\neq i$)
\item Always partially charge decisions (i.e. $q_i=0~\forall~i$) with proportional sub-classes dispatching decisions (i.e. $\Pi_{ij}$ proportional to the customers sub-classes needs)
\item Equal split charge decisions (i.e. $q_i=0~\forall~i$) with proportional sub-classes dispatching decisions (i.e. $\Pi_{ij}$ proportional to the customers sub-classes needs)
\end{enumerate}
These five schemes represent the possible non-optimized policies, in which each vehicle takes its own fixed decision irrespective of the system parameters. These schemes are possible in case of a non connected and optimized system.\\
\indent Fig. \ref{fig:comp-decreasing} compares these approaches with a decreasing SoC distribution. The figure clearly show superior performances for our derived optimal policy compared to the other policies, especially as $\sum ^{n}_{i=1}{\lambda_c^{(i)}}$ gets closer to $\lambda_v$, which are the most properly engineers scenarios (as large differences between these two quantities results in very low utilization), This approves the expression found in lemma \ref{lem1}. A Gains of 49.3\%, 69.8\%, 93.22\%, 86.7\% and 94.4\% in the performances, can be noticed compared to the previously stated policies respectively.
\ignore{\indent For Gaussian distribution (Fig.\ref{fig:comp-Gaussian}) the results are very close to the first policy (which is also the previously optimized policy proposed in  \cite{ref17} \cite{ref18}). For the second policy the gain is up to 16.6\% while compared to the third, forth and fifth previously stated policies, our model has 88\%, 90\% and 92\% gain respectively.\\
\indent For Decreasing distribution (Fig.\ref{fig:comp-decreasing}) a Gains of 49.3\%, 69.8\%, 93.22\%, 86.7\% and 94.4\% in the performances, can be noticed compared to the previously stated policies respectively.}\\
\indent Fig. \ref{fig:comp-charging} shows the study of the resilience requirements for our considered model in the critical scenarios of sudden reduction in the number of charging sources within the zone. This reduction may occur due to either natural (e.g., typical failures of one or more stations) or intentional (e.g., a malicious attack on the fog controller blocking its access to these sources). The resilience measure that the fog controller can take in these scenarios is to notify its customers of a transient increase in the vehicles' response times given the available vehicles in the zone. For this, we are only comparing the new proposed model to our previously proposed model. The figures shows clearly the advantage brought by the sub-class dispatching model. The gain gets higher in critical scenarios and reaches up to 65\% with very acceptable maximum response time even is very low energy resources.
This demonstrates the importance of our proposed scheme in achieving better customer satisfaction.\\

\begin{figure}[t]
\centering
  \includegraphics[width=0.85\linewidth]{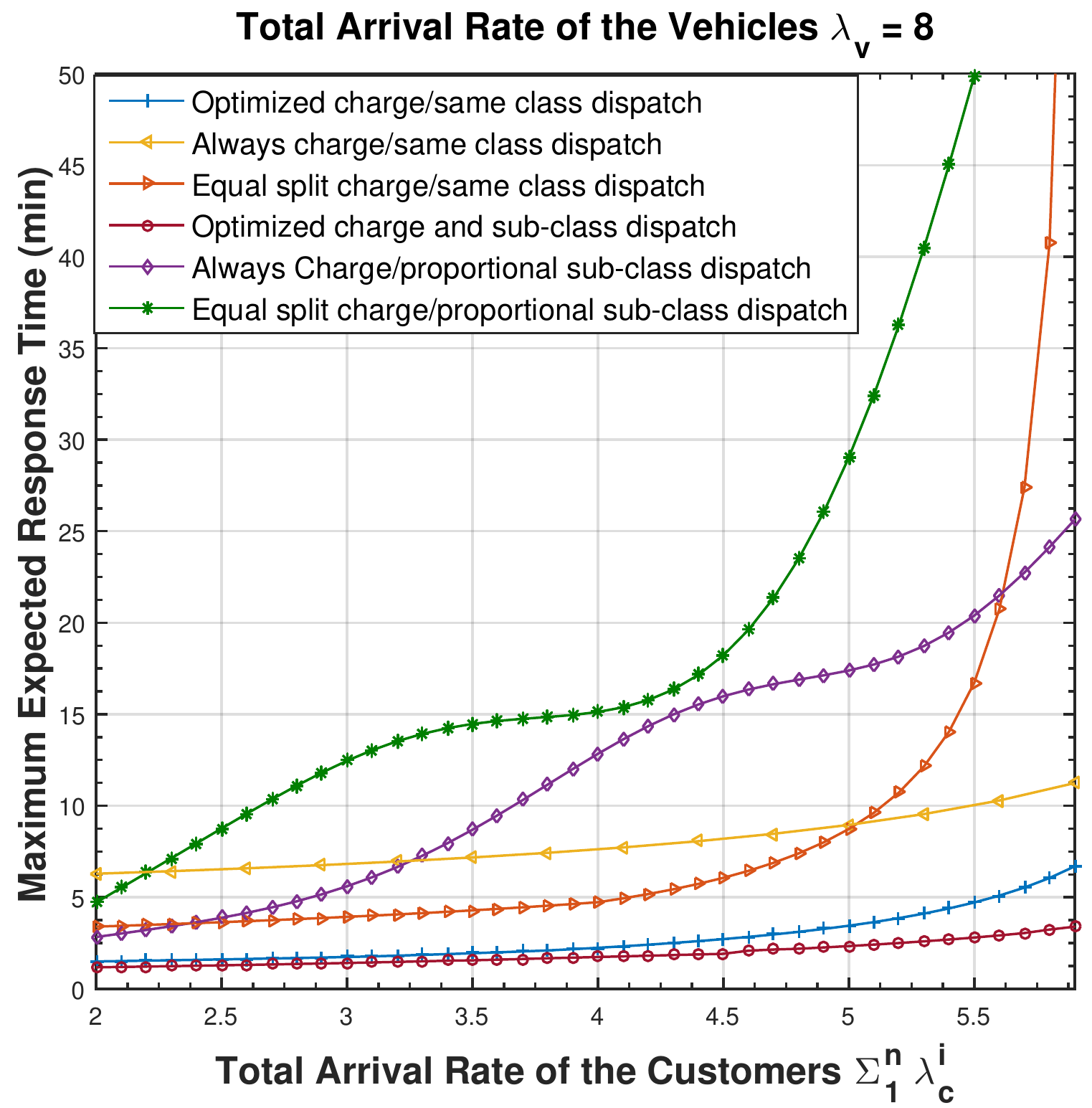}
    \caption{Comparison to non-optimized policies for Decreasing SoC distribution. \label{fig:comp-decreasing} }
\end{figure}
\begin{figure}[t]
\centering
  \includegraphics[width=0.85\linewidth]{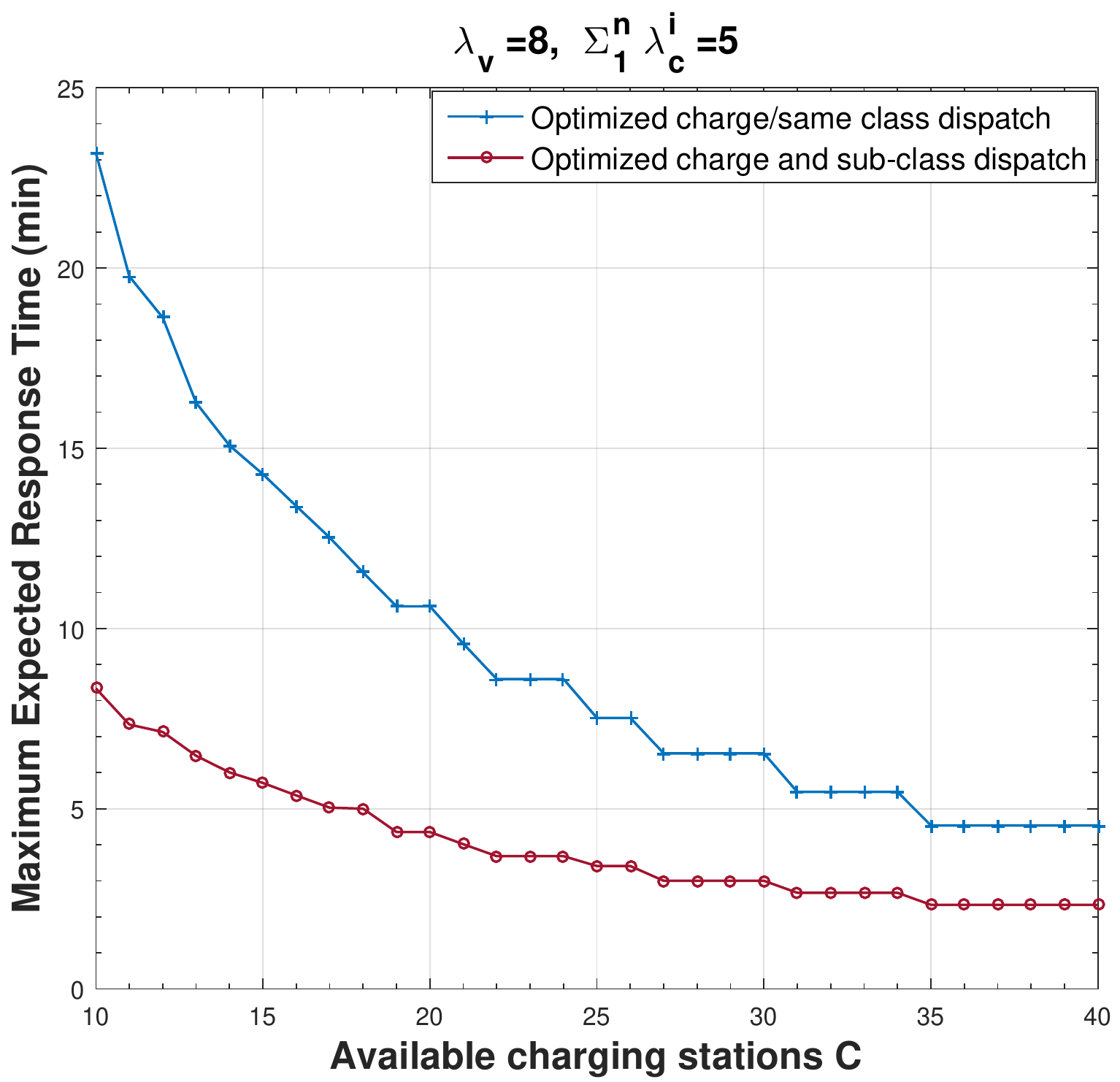}
    \caption{Effect of varying charging points availability. \label{fig:comp-charging} }
\end{figure}
\section{Conclusion}
In this paper, we proposed solutions to the computational and charging bottlenecks threatening the success of AEMoD systems. The computational bottleneck can be resolved by employing a fog-based architecture to distribute the optimization loads over different service zones, reduce communication delays, and matches the nature of dispatching and charging processes of AEMoD vehicles. We also proposed a multi-class dispatching and charging scheme and developed its queuing model and stability conditions. We then formulated the problem of optimizing the proportions of vehicles of each class that will partially/fully charge or directly serve customers of same class or any lower sub-class as an optimization problem, in order to minimize the maximum expected system response time while respecting the system stability constraints. The optimal decisions and corresponding maximum response time were analytically derived. Simulation results demonstrated both the merits of our proposed optimal decision scheme compared to typical non-optimized schemes and previously optimized scheme, and its performance for different distributions of vehicle SoC and customer trip distances.
\ignore{
For the future work, we are planning to modify the system formulation and compare the final results.This will include using the optimization over the service load of the queues $\frac{\mu}{\lambda}$ or over the a whole vector $\dfrac{1}{\lambda_v^({i})-\lambda_c^({i})}$.This work will be extended to time-varying optimization and predictive control of the next time frame that will be applied to real life collected data.
The future optimization will include the number of cars depending on the demand and the area.
}

\newpage
\onecolumn
\appendices
\section{Proof of Lemma \ref{lem1}}\label{app:lem1}
From (\ref{eq:4}) and (\ref{eq:7}) we have : 
\begin{equation}\label{eq:18}
\begin{aligned}
& & & \lambda_c^{(i)} + R \leq \lambda_v \sum_{k=i}^{n-1} (p_{k-1} - p_{k-1}{q_{k-1}} + p_{k}q_{k})\Pi_{ki} + \lambda_v (p_{n-1} - p_{n-1}{q_{n-1}} + p_{0}q_{0})\Pi_{ni} , \; i = 1, \ldots, n-1 \\
& & & \lambda_c^{(n)} +R \leq \lambda_v (p_{n-1} - p_{n-1}{q_{n-1}} + p_{0}q_{0})\Pi_{nn}
\end{aligned}
\end{equation}
The summation of all the inequalities in (\ref{eq:18}) gives a new inequality 
\begin{equation}\label{eq:19}
\begin{aligned}
\sum ^{n}_{i=1}{\lambda_c^{(i)}} + n R \leq \sum ^{n}_{i=1}\sum ^{n}_{k=1} \lambda_v^{(k)} \Pi_{ki}
\end{aligned}
\end{equation}
Which is equivalent to: 
\begin{equation}\label{eq:20}
\begin{aligned}
\sum ^{n}_{i=1}{\lambda_c^{(i)}} + n R \leq \sum ^{n}_{i=1}\lambda_v^{(i)}\sum ^{i}_{k=1}  \Pi_{ik}
\end{aligned}
\end{equation}
Since We have $\sum ^{i}_{j=1} \Pi_{ij}= 1 , \; i = 1, \ldots, n $ then: 
\begin{equation}\label{eq:21}
\begin{aligned}
\sum^{n}_{i=1}{\lambda_c^{(i)}} +nR \leq \sum ^{n}_{i=1}\lambda_v^{(i)}
\end{aligned}
\end{equation}
From (\ref{eq:1}) and (\ref{eq:2})
\begin{equation}\label{eq:22}
\begin{aligned}
\sum ^{n}_{i=1}\lambda_v^{(i)}= \sum ^{n-1}_{i=1}{(p_{i-1}\overline{q}_{i-1} + p_{i}q_{i})} + (p_{n-1}\overline{q}_{n-1} + p_{0}q_{0})
\end{aligned}
\end{equation}
Since ${\overline{q}_{i} + q_{i}}=1$ and we have $\sum ^{n-1}_{i=0}{p_{i}} = 1  $  then:
\begin{equation}\label{eq:23}
\begin{aligned}
\sum ^{n}_{i=1}{\lambda_c^{(i)}} +nR \leq \lambda_v
\end{aligned}
\end{equation}

\section{Proof of Theorem \ref{thm1}} \label{app:thm1}
Applying the KKT conditions to the inequalities constraints of (\ref{eq:14}), we get:
\begin{equation}\label{eq:28}
\begin{aligned}
& & &\alpha_{i}^*( \lambda_c^{(i)} + R^* - \lambda_v \sum_{k=i}^{n-1} (p_{k-1} - p_{k-1}{q_{k-1}^*} + p_{k}q_{k}^*)\Pi_{ki}^* - \lambda_v (p_{n-1} - p_{n-1}{q_{n-1}^*} + p_{0}q_{0}^*)\Pi_{ni}^* )=0 \; i = 1, \ldots, n-1 \\
& & &\alpha_{n}^* ( \lambda^{(n)}_c + R^* - \lambda_v (p_{n-1} - p_{n-1}{q_{n-1}^*} + p_{0}q_{0}^*)\Pi_{nn}^*  ) = 0.\\
& & &\beta_{0}^* (\sum ^{n-1}_{i=0}\lambda_v(p_{i} - p_{i}{q_{i}^*}) - C (n \mu_c)) = 0.\\
& & &\beta_{1}^*(\lambda_v p_{0}q_{0}^* - \mu_c) = 0 \\
& & & \gamma_{i}^*(q_{i}^* - 1) = 0  , \;i = 0, \ldots, n-1.\\
& & & \gamma_{n}^*(R^* - \frac{\lambda_v - \sum ^{n}_{i=1}{\lambda_c^{(i)}}}{n} ) = 0 .\\
& & &\omega_{i}^*q_{i}^* = 0 , \;i = 0, \ldots, n-1.\\
& & & \omega_{n}^*(R^*-\epsilon_2) = 0.\\
& & & \nu_{ij}^* (\Pi_{ij}^*-1)=0, \;i = 1, \ldots, n , \;j = 1, \ldots i.\\
& & & \mu_{ij}^*\Pi_{ij}^* =0 , \;i = 1, \ldots, n , \;j = 1, \ldots i.\\
& & & \delta_i^*(\sum_{j=1}^{i}\Pi_{ij}^*-1)=0, \;i = 1, \ldots, n .
\end{aligned}
\end{equation}
Likewise, applying the KKT conditions to the Lagrangian function in (\ref{eq:15}), and knowing that the gradient of the Lagrangian function goes to $0$ at the lower bound solution, we get the following set of equalities:
\begin{equation}\label{eq:29}
\begin{aligned}
& \frac{\partial L}{\partial q_i} =\lambda_v p_{i}\left(\sum_{j=1}^{i} \alpha_{j}^*(\Pi_{i+1j}^*-\Pi_{ij}^*) +\alpha_{i+1}^*\Pi_{i+1i+1}^* - \beta_0^*\right) -\omega_{i}^* + \gamma_{i}^*=0 , \;  i= 1, \ldots, n-1.\\
& \frac{\partial L}{\partial q_0} = \lambda_v p_{0}\left(\alpha_{1}^*\Pi_{11}^* - \sum_{j=1}^{n}\alpha_{j}^*\Pi_{nj}^*- \beta_0^* + \beta_1^*\right) - \omega_{0}^* + \gamma_{0}^* =0 \ \\
& \frac{\partial L}{\partial \Pi_{ij}} =\alpha_{j}^*\lambda_v \left(p_{i-1}{q_{i-1}}^* - p_{i-1} - p_{i}q_{i}^*\right) +\delta_i^* +\nu_{ij}^*-\mu_{ij}^* = 0\\
&\frac{\partial L}{\partial \Pi_{nj}} = \alpha_{j}^*\lambda_v \left(p_{n-1}{q_{n-1}}^* - p_{n-1} - p_{0}q_{0}^*\right) +\delta_n^* +\nu_{nj}^*-\mu_{nj}^*= 0\\
&\frac{\partial L}{\partial R}= -1 + \sum_{i=1}^{n} \alpha_{i}^* - \omega_n^* + \gamma_n^*= 0
\end{aligned}
\end{equation}
Multiplying the each of the partial derivatives in (\ref{eq:29}) by the derivation variable itself combined with the KKT conditions of the variables lower bounds inequalities given by (\ref{eq:28}) gives :
\begin{equation}\label{eq:30}
\begin{aligned}
& \frac{\partial L}{\partial q_i}\times q_i =q_i^*\lambda_v p_{i}\left(\sum_{j=1}^{i} \alpha_{j}^*(\Pi_{i+1j}^*-\Pi_{ij}^*) +\alpha_{i+1}^*\Pi_{i+1i+1}^* - \beta_0^*\right) + \gamma_{i}^*=0 , \;  i= 1, \ldots, n-1.\\
& \frac{\partial L}{\partial q_0}\times q_0 =q_0^* \lambda_v p_{0}\left(\alpha_{1}^*\Pi_{11}^* - \sum_{j=1}^{n}\alpha_{j}^*\Pi_{nj}^*- \beta_0^* + \beta_1^*\right) + \gamma_{0}^* =0 \ \\
& \frac{\partial L}{\partial \Pi_{ij}} \times \Pi_{ij} =\Pi_{ij}^* \left(\alpha_{j}^*\lambda_v \left(p_{i-1}{q_{i-1}}^* - p_{i-1} - p_{i}q_{i}^*\right) +\delta_i^*\right) +\nu_{ij}^* = 0\\
&\frac{\partial L}{\partial \Pi_{nj}} \times \Pi_{nj} = \Pi_{nj}^* \left(\alpha_{j}^*\lambda_v \left(p_{n-1}{q_{n-1}}^* - p_{n-1} - p_{0}q_{0}^*\right) +\delta_n^* \right)+\nu_{nj}^*= 0\\
&\frac{\partial L}{\partial R}\times R= -R + R \sum_{i=1}^{n} \alpha_{i}^* - \omega_n^* \epsilon_2 + \gamma_n^* (\frac{\lambda_v - \sum ^{n}_{i=1}{\lambda_c^{(i)}}}{n})= 0
\end{aligned}
\end{equation}
When we inject the result of the first four equations in (\ref{eq:30}) in the KKT conditions on the upper bound conditions of the variables $q_i$ and $\Pi_{ij}$ we find: 
\begin{equation}\label{eq:31}
\begin{aligned}
& q_i^*(q_i^*-1)\left(\sum_{j=1}^{i} \alpha_{j}^*(\Pi_{i+1j}^*-\Pi_{ij}^*) +\alpha_{i+1}^*\Pi_{i+1i+1}^* - \beta_0^*\right)=0 , \;  i= 1, \ldots, n-1.\\
& q_0^*(q_0^*-1)\left(\alpha_{1}^*\Pi_{11}^* - \sum_{j=1}^{n}\alpha_{j}^*\Pi_{nj}^*- \beta_0^* + \beta_1^*\right) =0 \ \\
& \Pi_{ij}^*(\Pi_{ij}^*-1) \left(\alpha_{j}^*\lambda_v \left(p_{i-1}{q_{i-1}}^* - p_{i-1} - p_{i}q_{i}^*\right) +\delta_i^*\right) = 0\\
& \Pi_{nj}^*(\Pi_{nj}^*-1) \left(\alpha_{j}^*\lambda_v \left(p_{n-1}{q_{n-1}}^* - p_{n-1} - p_{0}q_{0}^*\right) +\delta_n^* \right)= 0\\
\end{aligned}
\end{equation}
From (\ref{eq:31}) we have :\\
$0 < q_{0}^* <1$ only if $\alpha_{1}^*\Pi_{11}^* - \sum_{j=1}^{n}\alpha_{j}^*\Pi_{nj}^*- \beta_0^* + \beta_1^*=0$\\
$0 < q_{i}^* <1$ only if $\sum_{j=1}^{i} \alpha_{j}^*(\Pi_{i+1j}^*-\Pi_{ij}^*) +\alpha_{i+1}^*\Pi_{i+1i+1}^* - \beta_0^*=0$ \\
$0 < \Pi_{ij}^* <1$ only if $\alpha_{j}^*\lambda_v \left(p_{i-1}{q_{i-1}}^* - p_{i-1} - p_{i}q_{i}^*\right) +\delta_i^*=0$ \\
$0 < \Pi_{nj}^* <1$ only if $\alpha_{j}^*\lambda_v \left(p_{n-1}{q_{n-1}}^* - p_{n-1} - p_{0}q_{0}^*\right) +\delta_n^*=0$ \\

Since $0 \leq q_{i}^* \leq 1$ and $0 \leq \Pi_{ij}^*  \leq 1$ then these equalities may not always be true \\
if  $\alpha_{1}^*\Pi_{11}^* - \sum_{j=1}^{n}\alpha_{j}^*\Pi_{nj}^*- \beta_0^* + \beta_1^*>0$ and we know that $\gamma_{0}^* \geq 0$ then  $\gamma_{0}^* = 0$ which gives $q_{0}^* \ne 1 $ and $q_{0}^* = 0 $.\\
if  $\sum_{j=1}^{i} \alpha_{j}^*(\Pi_{i+1j}^*-\Pi_{ij}^*) +\alpha_{i+1}^*\Pi_{i+1i+1}^* - \beta_0^*>0$ which gives $q_{i}^* \ne 1 $ and $q_{i}^* = 0 $\\
if  $\alpha_{1}^*\Pi_{11}^* - \sum_{j=1}^{n}\alpha_{j}^*\Pi_{nj}^*- \beta_0^* + \beta_1^*<0$ then  $\gamma_{0}^* > 0$ (it cannot be 0 because this will contradict with the value of $q_{i}$), which implies that $q_{0}^* = 1 $. \\
if  $\sum_{j=1}^{i} \alpha_{j}^*(\Pi_{i+1j}^*-\Pi_{ij}^*) +\alpha_{i+1}^*\Pi_{i+1i+1}^* - \beta_0^*<0$ then $\gamma_{i}^* > 0$ (it cannot be 0 because this contradicts with the value of $q_{i}$), which implies that $q_{i}^* = 1 $\\

if  $\alpha_{j}^*\lambda_v \left(p_{i-1}{q_{i-1}}^* - p_{i-1} - p_{i}q_{i}^*\right) +\delta_i^*>0$ and we know that $\nu_{ij}^* \geq 0$ then  $\nu_{ij}^* = 0$ which gives $\Pi_{ij}^* \ne 1 $ and $\Pi_{ij}^* = 0 $.\\
if  $\alpha_{j}^*\lambda_v \left(p_{n-1}{q_{n-1}}^* - p_{n-1} - p_{0}q_{0}^*\right) +\delta_n^*>0$ which gives $\Pi_{nj}^* \ne 1 $ and $\Pi_{nj}^* = 0 $\\
if  $\alpha_{j}^*\lambda_v \left(p_{i-1}{q_{i-1}}^* - p_{i-1} - p_{i}q_{i}^*\right) +\delta_i^*<0$ then  $\nu_{ij}^* > 0$ (it cannot be 0 because this will contradict with the value of $\Pi_{ij}$), which implies that $\Pi_{ij} = 1 $. \\
if  $\alpha_{j}^*\lambda_v \left(p_{n-1}{q_{n-1}}^* - p_{n-1} - p_{0}q_{0}^*\right) +\delta_n^*<0$ then  $\nu_{nj}^* > 0$ (it cannot be 0 because this will contradict with the value of $\Pi_{nj}$), which implies that $\Pi_{nj} = 1 $. \\

We have also from the KKT conditions given by equation in in (\ref{eq:28}) that says either the Lagrangian coefficient is 0 or its the associated inequality is an equality: \\
if $\beta_1^* \ne 0 $ we have $q_{0}^* =\frac{\mu_c}{\lambda_v^* p_0}$\\
if $\alpha_{n}^* \ne 0$ we have $\frac{ \lambda_c^{(n)}+\lambda_v p_{n-1}q_{n-1}^*\Pi_{nn}^* - \lambda_v p_{n-1} \Pi_{nn}^* + R^*}{\lambda_v p_{0} \Pi{nn}^* } $\\
if $\alpha_{i}^* \ne 0$ , we have $ q_{i}^* =\frac{ R^*+\lambda_c^{(i)}+\lambda_v\left[\sum_{k=i+2}^{n-1}(p_{k-1}{q_{k-1}^*} -p_{k}{q_{k}^*})\Pi_{ki} + (p_{n-1}{q_{n-1}^*} -p_{0}{q_{0}^*})\Pi_{ni} - \sum_{k=i}^{n}p_{k-1}\Pi_{ki}+p_{i-1}q_{i-1}^*\Pi_{ii}^*-p_{i+1}q_{i+1}\Pi_{i+1i+1} \right]}{\lambda_v p_{i}(\Pi_{ii}-\Pi_{i+1i+1}) } $ \\for $i= 1, \ldots, n-1$\\

Otherwise by the Lagrangian relaxation: \\
$q_{i}^*=\zeta_i(R^*,q^*,\Pi^*,\alpha^*,\beta^*,\gamma^*,\omega^*,\mu^*,\nu^*,\delta^*)$ for $i= 1, \ldots, n-1$ and $\Pi_{ij}^*=\zeta_{ij}(R^*,q^*,\Pi^*,\alpha^*,\beta^*,\gamma^*,\omega^*,\mu^*,\nu^*,\delta^*)$ 
where $\zeta_i$ and $\zeta_{ij}$ are the solution that that maximize $\underset{\mathbf{q},\mathbf{\Pi}}\inf~ L(\mathbf{q},\Pi^*,R^*,\alpha^*,\beta^*,\gamma^*,\omega^*,\mu^*,\nu^*,\delta^*)$\\

Now in order to find the expression of $R^*$ we first look at its upper bound associated condition in (\ref{eq:28}). From there we can say that if 
$\omega_n^* \ne 0$ then $R^* = \epsilon_2 $ and if $\gamma_n^* \ne 0$ then $R* = \frac{\lambda_v - \sum ^{n}_{i=1}{\lambda_c^{(i)}}}{n} $ \\

Otherwise, from the last equation in (\ref{eq:30}),
if $\omega_n^* = 0$ and $\gamma_n^*=0$ then
\begin{equation}
\begin{aligned}
&R^* = \sum_{i=1}^{n-1}\alpha_{i}^*( \lambda_v \sum_{k=i}^{n-1} (p_{k-1} - p_{k-1}{q_{k-1}^*} + p_{k}q_{k}^*)\Pi_{ki}^*+ \lambda_v (p_{n-1} - p_{n-1}{q_{n-1}^*} + p_{0}q_{0}^*)\Pi_{ni}^* - \lambda_c^{(i)})\\ & + \alpha_{n}^* ( \lambda_v (p_{n-1} - p_{n-1}{q_{n-1}^*} + p_{0}q_{0}^*)\Pi_{nn}^* - \lambda^{(n)}_c  )\\
\end{aligned}
\end{equation}

\end{document}